\documentclass[twocolumn]{article}

\usepackage{amsmath}
\usepackage{amsthm}
\usepackage{amssymb}
\usepackage{graphicx}
\usepackage{url}
\usepackage{subfigure}
\usepackage{multirow}
\usepackage{balance}

\newtheorem{lemma}{Lemma}

\newcommand{\comment}[1]{}
\newcommand{\figwidth}{0.45\textwidth}

\title{Caching Stars in the Sky: A Semantic Caching Approach to Accelerate
Skyline Queries}

\author{
\hspace*{-5mm}
\begin{tabular}{ccc}
	Arnab Bhattacharya & B. Palvali Teja & Sourav Dutta \\
	\url{arnabb@iitk.ac.in} & \url{palvali.teja@gmail.com} & \url{sodutta3@in.ibm.com} \\
	{\small Dept. of Computer Science and Engineering,} & {\small Amazon Development Limited,} & {\small IBM Research Laboratory,}\\
	{\small Indian Institute of Technology, Kanpur,} & {\small Hyderabad,} & {\small New Delhi,} \\
	{\small Kanpur, India} & {\small India} & {\small India} \\
\end{tabular}
}

\date{}

\begin{document}

\maketitle

\begin{abstract}
	Multi-criteria decision making has been made possible with the advent of
	skyline queries. However, processing such queries for high dimensional
	datasets remains a time consuming task. Real-time applications are thus
	infeasible, especially for non-indexed skyline techniques where the datasets
	arrive online.  In this paper, we propose a caching mechanism that uses the
	semantics of previous skyline queries to improve the processing time of a
	new query. In addition to exact queries, utilizing such special semantics
	allow accelerating related queries. We achieve this by generating partial
	result sets guaranteed to be in the skyline sets. We also propose an index
	structure for efficient organization of the cached queries.  Experiments on
	synthetic and real datasets show the effectiveness and scalability of our
	proposed methods.
\end{abstract}

\section{Introduction}
\label{sec:intro}

To address the problem of multi-criteria decision making and user preference
queries over attributes in relations where there is no clear preference
function, B\"orzs\"onyi et al.~\cite{kossmann} introduced skyline queries.  The
classic example of a skyline query involves choosing hotels that are good in
terms of two attributes, price and distance to beach.  The query discards hotels
that are both dearer and farther than a skyline hotel.  Formally, for every
attribute, there is a preference function that states which values dominate.

Efficient indexes are difficult to built on relations available only at run-time
or on-the-fly~\cite{4354442}.  Hence, skyline queries suffer from large
processing time and I/O bottleneck.  Caching techniques improve the situation to
some extent.  However, the use of traditional tuple and page caching techniques
do not promise significant improvement for skyline queries as user interests are
unpredictable and an inexact query with even a slight modification where
preferences are over a different subset of attributes, results in a cache miss.
For example, consider the following skyline queries:
\begin{verbatim}
select * from Airlines skyline of
       Duration min, Cost min, Services max
select * from Airlines skyline of
       Duration min, Cost min, Rating max
\end{verbatim}
The new query
\begin{verbatim}
select * from Airlines skyline of
       Duration min, Cost min
\end{verbatim}
can be answered completely from the cache if the results of the previous one are
stored and intelligent semantic caching techniques are applied. 

The special \emph{semantics} of the skyline queries allow such similar or
related queries to be processed mostly from the cache using the results of the
previous queries, without accessing the database.  Although not all skyline
queries can be handled so efficiently, the use of cache does significantly
accelerate them by producing at least partial results, which is not possible
using traditional caching mechanisms.

Our contributions in this paper are as follows:
\begin{enumerate}
	\item We introduce the concept of semantic caching for skyline queries.
	\item We categorize a new skyline query into four types according to the
		content in the cache and design efficient algorithms to process each of
		them.
	\item We design an index structure for organizing the past skyline queries
		in the cache and show how this index helps in searching the cache for
		processing the new query.
\end{enumerate}

The rest of the paper is organized as follows. Section~\ref{sec:relwork} reviews
previous research on semantic caching and skyline queries.  In
Section~\ref{sec:semcache}, a cache model is designed for reusing result sets of
previous skyline queries.   Section~\ref{sec:index} describes an index structure
to organize and access the semantic descriptions of past queries efficiently.
It also describes the cache replacement policy. In Section~\ref{sec:exp}, the
performance of skyline caching is examined through experiments.  Finally, we
summarize our work and discuss future research in Section~\ref{sec:conclusion}.

\section{Background and related work}
\label{sec:relwork}

Consider a relation $R$ with preferences specified for $k$ attributes.  A tuple
$r_i =$ $(r_{i1}, r_{i2}, \dots, r_{ik})$ \emph{dominates} another tuple $r_j =
(r_{j1}, r_{j2}, \dots, r_{jk})$ (denoted by  $r_i \succ r_j$) if for all $k$
attributes, $r_{ic}$ is preferred or equal to $r_{jc}$, and for at least one
attribute $d$, $r_{id}$ is \emph{strictly} preferred to $r_{jd}$.  The
\emph{preference functions} for each attribute are specified as part of the
skyline query.  A tuple $r$ is said to be in the \emph{skyline} set of $R$ if
there does not exist any tuple $s \in R$ that dominates $r$.

Skyline queries have been imported to databases from the maximum vector problem
or Pareto curve~\cite{kung} in computational geometry.  The first algorithm was
proposed by Kung et al.~\cite{kung}.  BNL~\cite{kossmann} uses a nested loop
approach by repeatedly reading the set of tuples.  SFS~\cite{chomicki} improves
it by sorting the data based on a monotone function.  LESS~\cite{1083622}
combines the best features of these external algorithms; however, its
performance depends on the availability of pre-sorted data.  A
\emph{divide-and-conquer} approach to partition the data so as to fit into the
main memory was proposed in~\cite{chomicki}. Using index structures, algorithms such
as NN~\cite{nn} and BBS~\cite{papadias} have been proposed. 

The idea of caching query results to optimize subsequent query processing was
first studied in~\cite{finkelstein,larson}.  Several algorithms have been
proposed in~\cite{dar,qren,dar,jonsson} that uses semantic caching efficiently
and effectively for general applications.  Also dynamic caching policies have
been studied~\cite{ssdbm}.

Several intelligent structures, e.g., SkyCube~\cite{p241} and compressed
skycubes~\cite{p491}, have been proposed to efficiently compute the varying
skyline queries based on approximate correlated user queries by using the
computational dependencies among related queries.  However, complete
construction of these structures are inefficient in real-time applications.
Further, in caching scenarios, the entire cube may not fit in the limited cache
size.  In this paper, we revisit the concept of semantic caching for skyline
queries and propose novel and intelligent algorithms along with an indexing
scheme.

\section{Capturing semantics of skyline queries}
\label{sec:semcache}

In this section, we characterize a skyline query in terms of
previous skyline queries, which help relate the new query
to those in the cache.

\subsection{Characterization of queries}

\setcounter{footnote}{0}

We assume that all the skyline queries are for a single relation.\footnote{For
different relations, separate (logical) caches can be maintained.} We also
assume the \emph{distinct value condition}~\cite{p241} which states that if no
two data points have the same values for all the dimensions, then the skyline
result for dimension set $A$ is a subset of the skyline result for dimension set
$B$ when $A \subset B$. Each query is represented as the set of attributes of
skyline preferences, which we assume is not altered for a particular dimension.
This assumption holds since the preferences of users are generally the
same.\footnote{The case where preferences may vary can be handled by considering
each (attribute, preference) pair as a separate attribute.}

Given a cache modeled as a set of queries: $C = \{S_1, S_2, \dots, S_n\}$ where
each cached query $S_j$ is again a set of attributes, a new query $Q = \{a_1,
a_2, \dots, a_q\}$ can be characterized into \emph{at least} one of the following groups:
\begin{enumerate}
	\item \textbf{Exact Query}: $Q$ is an \emph{exact} query if it matches exactly with
		a cached query, i.e., $\exists S_j, Q = S_j$, indicating the
		re-occurrence of a previous query.
	\item \textbf{Subset Query}: $Q$ is a \emph{subset} query if all its attributes are
		completely contained in a cached query, i.e., $\exists S_j, Q \subset
		S_j$.
	\item \textbf{Partial Query}: $Q$ is a \emph{partial} query if some of its
		attributes are subsets of a cached query, i.e., $\exists Q' \subset Q,
		\exists S_j, Q' \subseteq S_j$.
	\item \textbf{Novel Query}: $Q$ is a \emph{novel} query if none of its attributes
		are cached, i.e., if $\forall a_i \in Q, \forall S_j, a_i \notin S_j$.
\end{enumerate}

The hierarchy of categorization is important for query processing (details in
Section~\ref{subsec:processing}).  The \emph{most} restrictive category
determines the type of the query.  For example, if a query is both an exact and
a subset query, it is treated as an exact query and a query is categorized as a 
novel query if and only if it cannot be characterized as an exact, subset or partial query.
Table~\ref{tab:query} describes an example in detail.  When a new skyline is
queried, all the semantic segments stored in the cache are scanned to determine
the type of the new query.

\begin{table*}[t]
	\centering
	\begin{tabular}{|c||c|c|c||c|}
		\hline
		\textbf{Cache} & \multicolumn{4}{c|}{$S_1 = \{1,2,3\}$, $S_2 = \{1,2\}$, $S_3 = \{3,4\}$, $S_4 = \{5,6\}$} \\
		\hline
		\hline
		\textbf{Query} & \textbf{Exact} & \textbf{Subset} & \textbf{Partial} & \textbf{Type} \\
		\hline
		\hline
		$Q_1 = \{1,2\}$ & $S_2$ & $S_1$ & $S_1, S_2$ & Exact \\
		$Q_2 = \{2,3\}$ & - & $S_1$ & $S_1, S_2, S_3$ & Subset \\
		$Q_3 = \{4,5\}$ & - & - & $S_3, S_4$ & Partial \\
		$Q_4 = \{6,7\}$ & - & - & $S_4$ & Partial \\
		$Q_5 = \{7,8\}$ & - & - & - & Novel \\
		\hline
	\end{tabular}
	\caption{Characterization of queries.}
	\label{tab:query}
\end{table*}

Table~\ref{tab:query} describes an example in detail.  The contents of the cache
are shown in the top row.  The main rows of the table depict how each query can
be categorized into the different query types.  For example, query $Q_1$ is an
exact query because it matches with $S_2$.  It is also a subset query as its
attributes are completely contained within $S_1$.  Similarly, it can be
categorized as a partial query since some of its attributes are contained in the
cached queries $S_1$ and $S_2$.  However, it will be treated as an exact query
since that is the most restrictive category.  Query $Q_2$ is similarly
classified as a subset query even though it is also a partial query.  Query
$Q_3$ is a simple partial query.  Query $Q_4$ will also be treated as a partial
query even though some of its attributes (attribute $7$) is not cached at all.
Query $Q_5$ is a novel query as it cannot be categorized into any of the three
other types.

\subsection{Semantic segments}
\label{sec:semanticsegment}

While a cached semantic query is simply a set of attributes, certain other
descriptors are also encapsulated in a data structure called the \emph{semantic
segment} for each query.  The semantic segment for a query contains the
following fields:
\begin{itemize}
	\item \emph{Attributes and preferences}: Attributes on which
		the skyline preferences are applied.
	\item \emph{Result}: A link to a table of records that constitute the
		answer to this query.
	\item \emph{Replacement value}: It is used for cache replacement methods
		(see Section~\ref{sec:cachereplacement}).
\end{itemize}

\subsection{Query processing algorithms}
\label{subsec:processing}

Based on the type of the new query, different query processing strategies are
followed as described in this section.

\subsubsection{Exact queries}

If the query is an exact query, the result set of the cached query is directly
returned as the result set of the new query.

\subsubsection{Subset queries}

If the new query $Q$ is a subset of a cached query $S_j$, then the following
lemma shows that the result set of $Q$ is a subset of the result set of $S_j$.

\begin{lemma}
	\label{lem:subset}
	If a skyline query $Q$ is a subset of another skyline query $S$, then the
	result set of $Q$ is completely contained in the result set of $S$.
\end{lemma}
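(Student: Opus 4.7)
The plan is to prove the lemma by contradiction, leveraging the distinct value condition stated in Section~\ref{sec:semcache}. Let $r$ be an arbitrary tuple in the skyline of $Q$, and suppose for contradiction that $r$ is not in the skyline of $S$. Then by the definition of the skyline, there must exist some tuple $s$ in the relation that dominates $r$ with respect to the attribute set $S$, meaning $s$ is preferred or equal to $r$ on every attribute in $S$ and strictly preferred on at least one attribute $d \in S$.

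Next, I would exploit the inclusion $Q \subset S$. Since every attribute of $Q$ is also an attribute of $S$, the pairwise comparison $s$ versus $r$ is automatically at least as good for $s$ on every attribute of $Q$. The task then reduces to locating a strictly dominating attribute within $Q$ itself, so as to conclude that $s$ dominates $r$ on $Q$ and contradict $r \in \mathrm{skyline}(Q)$. If the distinguishing attribute $d$ happens to lie in $Q$, we are immediately done.

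The main obstacle, and the reason the distinct value condition is needed, is the remaining case: $d \in S \setminus Q$ and $s$ agrees exactly with $r$ on every attribute of $Q$. I would dispense with this case by invoking the distinct value condition, which guarantees that two distinct tuples cannot agree on all of the compared dimensions in the way that would let $r$ survive in $\mathrm{skyline}(Q)$ while being eliminated in $\mathrm{skyline}(S)$; in other words, the condition rules out exactly the pathological tie pattern on $Q$ that would otherwise break the subset relationship between the two skyline results. Once this case is excluded, we obtain a tuple $s$ that strictly dominates $r$ on some attribute in $Q$, contradicting $r \in \mathrm{skyline}(Q)$ and establishing $\mathrm{skyline}(Q) \subseteq \mathrm{skyline}(S)$.

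I expect the bulk of the writeup to be the contradiction argument itself, which is short; the subtlety lies entirely in how cleanly one appeals to the distinct value condition to close the tie case, so I would state that appeal explicitly rather than treating it as obvious.
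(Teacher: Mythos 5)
Your proof is correct and follows essentially the same route as the paper's: both reduce the claim to the observation that a tuple dominating $r$ with respect to $S$ must weakly dominate it on every attribute of $Q$, and both invoke the distinct value condition to dispose of the only problematic case, namely a tie on all of $Q$ with the strict preference occurring only in $S \setminus Q$. You argue by contradiction where the paper argues directly (these are contrapositives of the same step), and you are in fact more explicit than the paper about exactly where and why the distinct value condition is needed.
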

\begin{proof}
	Suppose $Q = \{a_1, a_2, \dots, a_q\}$.  Since it is a subset of $S$, $S$
	can be written as $\{a_1, a_2, \dots, a_q, s_1, s_2, \dots, s_n\}$.
	Consider a tuple $v$ which is a skyline record for $Q$.  Given the distinct 
	value condition, this implies that
	there does not exist any tuple $u \succ v$ such that $u$ dominates $v$ in
	all the attributes $\{a_1, a_2, \dots, a_q\}$.  Therefore, $u$ cannot
	dominate $v$ when more attributes $\{s_1, s_2, \dots, s_n\}$ are added.
	Thus, $v$ is a skyline record for $S$ as well.

	However, there can exist a tuple $u$ which is a skyline record for $S$ but
	not for $Q$.  Assume that $t \succ u$ in $\{a_1, a_2, \dots, a_q\}$ but $u
	\succ t$ in $\{s_1, s_2, \dots, s_n\}$.  Since $u$ is dominated in all 
	attributes of $Q$ by $t$, $u$ is not a skyline record for $Q$.
	\hfill{}
\end{proof}

The next lemma shows that to determine whether a tuple from the result set of
$S_j \supset Q$ is in the result set of $Q$, only the tuples in $S_j$ need to be
checked for dominance.

\begin{lemma}
	\label{lem:onlysubset}
	If a tuple $v$ in the result set of $S$ is not a skyline for $Q
	\subset S$, then there must exist $u \in result(S)$ such that $u \succ v$.
\end{lemma}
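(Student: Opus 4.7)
The plan is to extract a dominator of $v$ restricted to $Q$ from the full relation, then walk it up the dominance order of $S$ until it lands inside $\text{result}(S)$. Since $v \in \text{result}(S)$ but $v \notin \text{skyline}(Q)$, the definition of skyline gives some tuple $t$ in the base relation with $t \succ v$ over the attributes of $Q$. The goal is to argue that either $t$ itself, or some ancestor of $t$ in the $S$-dominance order, lies in $\text{result}(S)$ and still dominates $v$ on $Q$.

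First I would handle the easy case: if $t \in \text{result}(S)$, set $u = t$ and we are done. Otherwise, by definition of $\text{result}(S) = \text{skyline}(S)$, there exists $t_1$ with $t_1 \succ t$ over the attributes of $S$. If $t_1 \notin \text{skyline}(S)$, repeat to obtain $t_2 \succ_S t_1$, and so on. The dominance order on the (finite) relation with respect to $S$ is a strict partial order under the distinct value assumption, so the chain $t \prec_S t_1 \prec_S t_2 \prec_S \cdots$ must terminate. The terminal element $u$ is then a member of $\text{skyline}(S) = \text{result}(S)$.

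Next I would transfer the dominance over $S$ into dominance over $Q$. Since $Q \subset S$, each step $t_i \succ_S t_{i-1}$ implies at worst $t_i \succeq t_{i-1}$ on every attribute of $Q$ (the strict improvement may be located in an attribute of $S \setminus Q$). Composing these weak inequalities yields $u \succeq t$ on every attribute of $Q$. Combined with the original strict fact $t \succ v$ on $Q$, which supplies a strict improvement in at least one attribute of $Q$ and weak improvement in all of them, transitivity delivers $u \succ v$ on the attributes of $Q$, which is the dominance the lemma claims.

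The main obstacle, and the only subtle point, is exactly the preservation of the \emph{strict} inequality along the chain: the steps upward in $S$ may only be strict in some attribute outside $Q$, so the strictness witnessing $u \succ v$ must be inherited entirely from the initial step $t \succ v$. Writing out the coordinate-wise comparison on $Q$ once, for the generic step, makes this clear and the rest follows by a straightforward finite induction on the length of the chain.
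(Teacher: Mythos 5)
Your proof is correct and follows essentially the same route as the paper's: extract a $Q$-dominator $t$ of $v$, replace it by a dominator $u \in result(S)$, and compose the two dominance relations to get $u \succ v$ on the attributes of $Q$. You are in fact more careful than the paper on the two points it asserts without justification, namely the existence of a member of $result(S)$ dominating a non-skyline tuple $t$ (your finite-chain argument) and the preservation of strictness when each $S$-dominance step may be strict only in an attribute outside $Q$.
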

\begin{proof}
	Suppose $v \in result(S)$ is dominated in the attributes of $Q$ by a tuple
	$t \notin result(S)$. Since $t$ is not in the result set of $S$, there must
	exist a tuple $u \in result(S)$ that dominates $t$ in all the attributes of
	$S$ including that of $Q$.  Thus, $u \succ t$ and $t \succ v$ which together
	imply $u \succ v$, which is a contradiction.
	\hfill{}
\end{proof}

Hence, if none of the tuples in the result set of $S_j$ dominate a tuple $u$ in
all the attributes of $Q$, there cannot exist any other tuple in the relation
that can dominate $u$.  Then, $u$ will be in the result set of $Q$.  Otherwise,
it will not be.

If a new query $Q$ is a subset of many cached queries $S_i, S_j$, etc., the
processing becomes even faster.  Any tuple which is in the result set of $Q$
must be in the result set of all of $S_i, S_j$, etc.  Thus, only the tuples that
are in the \emph{intersection} of the result sets of these subset queries need
to be examined.  

While subset and exact queries can be processed from the cache itself without
accessing the database at all, the advantage cannot be retained for the other
two types of queries as explained next.

\subsubsection{Partial queries}

Suppose the new query $Q$ is partial to a cached query $S_j$.  The attributes
$Q' \subset Q$ are contained in $S_j$, and is equal to $S'_j \subseteq S_j$.
Using Lemma~\ref{lem:subset}, the skyline corresponding to the attributes $Q' =
S'_j$ is a subset of the skyline set maintained for $S_j$.  This subset is
computed and it serves as the \emph{base} set.  A special case of partial
queries allows the base set to be directly available -- when the query is a
\emph{superset} of $S_j$, i.e., $Q' = S_j$.  The \emph{entire} skyline set of
$S_j$ then serves as the base set for $Q$.

Unlike the case for subset queries, the computation of the base set does not
complete the processing.  The following lemma shows there may exist a
tuple not in the base set (i.e., the skyline set for $Q'$), but is part of the
skyline set of $Q$.

\begin{lemma}
	\label{lem:superset}
	A tuple in the skyline set of $Q$ need not be in the skyline set of its
subset $Q'$.
\end{lemma}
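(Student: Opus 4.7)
The plan is to establish the lemma by exhibiting a concrete scenario (which doubles as a short structural argument) in which a tuple survives the skyline computation on the larger attribute set $Q$ but is dominated once we restrict attention to the smaller set $Q'$. Because the statement is an existence/``need not'' claim, a witness suffices, and the reasoning will simultaneously explain why this asymmetry with Lemma~\ref{lem:subset} is inherent.

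First, I would fix $Q' \subset Q$ and write $Q = Q' \cup T$ where $T$ is the set of attributes present in $Q$ but not in $Q'$. I would then consider two tuples $u$ and $v$ whose values are such that $v$ is strictly preferred to $u$ on at least one attribute of $T$, while $u$ is strictly preferred to $v$ on at least one attribute of $Q'$ (and their remaining coordinates are incomparable or equal under the preferences). Under these conditions, neither $u \succ v$ nor $v \succ u$ holds when evaluated on all attributes of $Q$, so $v$ can be retained in the skyline of $Q$. On the other hand, when attention is restricted to $Q'$, the attributes of $T$ that helped $v$ are discarded, and $u$ dominates $v$ on every attribute of $Q'$; hence $v$ is excluded from the skyline of $Q'$.

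The cleanest way to present this is via a minimal two-dimensional example, e.g., $Q = \{a_1, a_2\}$, $Q' = \{a_1\}$, with tuples $u = (1, 2)$ and $v = (2, 1)$ under ``min'' preferences: both $u$ and $v$ are skyline for $Q$, but only $u$ is skyline for $Q'$, so $v \in \mathit{result}(Q) \setminus \mathit{result}(Q')$. I would then append a one-line remark that the underlying reason is that the distinct value condition plus Lemma~\ref{lem:subset} only force containment in the direction skyline$(Q') \subseteq$ skyline$(Q)$; the reverse containment fails exactly because added attributes can rescue an otherwise-dominated tuple.

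There is essentially no obstacle here---the lemma is negative in flavor, so producing a valid witness is the entire proof. The only mild care needed is to make sure the witness respects the distinct value condition assumed earlier in Section~\ref{sec:semcache}, which the example above does. I would keep the proof short, since its main role is to motivate why partial queries (unlike subset queries) cannot be resolved purely from the cache and must probe additional tuples from the base relation.
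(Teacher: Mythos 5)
Your proposal is correct and follows essentially the same route as the paper's proof: both argue that a tuple $v$ can be dominated by some $u$ on all attributes of $Q'$ yet survive in the skyline of $Q$ because $v$ is preferred on an added attribute. Your explicit two-dimensional numerical witness is a small (and welcome) concretization of the paper's ``it may well be the case'' scenario, but it is not a different argument.
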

\begin{proof}
	Suppose $Q = \{q'_1, q'_2, \dots, q'_n, q_1, q_2, \dots, q_m\}$ and its
	subset $Q' = \{q'_1, q'_2, \dots, q'_n\}$.  Consider a tuple $v$ that is in
	the skyline set of $Q$, i.e., there is no tuple $u$ that dominates $v$ in
	all the $n + m$ attributes.  However, it may well be the case that $u \succ
	v$ in the attributes $q'_1, q'_2, \dots, q'_n$ while $v \succ u$ in the
	other attributes $q_1, q_2, \dots, q_m$.  Then, $v$ will not be a skyline
	tuple for $Q'$.
	\hfill{}
\end{proof}

Thus, the base set alone is not sufficient; it is necessary to look for tuples
that satisfy the skyline criteria from the database.  Computing the base set may
then seem as a useless exercise as scanning the database cannot be avoided anyway.
However, the base set helps in two important ways.

First, since the tuples in the base set are guaranteed to be in the skyline set
of $Q$, they can be output immediately.  For real-time applications, the
implications of this concept of \emph{incremental} results are enormous.
Without accessing the database at all, some skyline records are output; while
the user is busy examining them, the other skyline tuples can be computed and
fetched from the database.

The second important advantage is the fact that the use of a base set can speed
up most of the generic skyline algorithms, such as BNL~\cite{kossmann},
SFS~\cite{chomicki}, and LESS~\cite{1083622}.  These algorithms maintain a window of
possible skyline tuples at all times found by scanning the database in order.
Since the base set fits in the memory (as it is in the cache) and is guaranteed
to contain only skyline tuples, it can significantly improve the query
processing time by serving as the initial window.  For other non-indexed
algorithms, the base set may or may not help, but will never deteriorate the
performance.

If there are two or more queries $S_i, S_j$, etc. that are partial to $Q$, base
sets can be computed from all of them.  The \emph{union} of these sets serve as
the consolidated base set which can then be used.  Since this combined base set
is larger than any of the base sets, the advantages are more pronounced.

\subsubsection{Novel queries}

Since the novel queries contain attributes on which no previous skyline operator
has been applied, the cache does not contain any information that can be used to
expedite the processing.  Consequently, such queries are completely processed
from the database.

\subsection{Need for an index structure}

Processing a new query first involves searching all the semantic segments in the
cache to determine its type.  This is a tedious task when the number of semantic
segments is large. As the number of semantic segments is exponential in number
of dimensions, it can be very large for high dimensional datasets.

However, there is an even bigger concern when the semantic segments are not
organized.  Consider two cached queries $S_1$ and $S_2$ where $S_2 \subset S_1$.
The tuples that form the result of $S_2$ are already stored in the result of
$S_1$.  However, when the semantic segments are stored na\"ively, these tuples
are maintained twice in the cache, thereby wasting precious cache memory.  The problem
is compounded when more queries that are subsets of $S_2$ are stored.

An efficient organization of the semantic segments in the cache that can avoid
storing redundant records and can retrieve the result set by comparing with
lesser number of cached queries instead of comparing with all of them is, thus,
required.  

\section{Index structure}
\label{sec:index}

The index structure that we design is a directed acyclic graph (DAG) linking the
different semantic segments.  The semantic segment for a query $S_1$ is made a
child of the semantic segment for a query $S_2$ if $S_1 \subset S_2$.  Clearly,
a semantic segment can have multiple parents, but there cannot be any
cycle.  Note that the graph may be a forest, hence a pseudo root 
node is added that acts as the parent of all root nodes to make it connected.  
In comparison to SkyCube based structures, it does not contain the entire gamut
of the user query space and is based only on the queries previously encountered,
thereby befitting cache space requirements.  

\subsection{Modified semantic segments}
\label{sec:modifiedsemanticsegment}

To maintain this index structure, in addition to the fields described in
Section~\ref{sec:semanticsegment}, two more fields are added to each semantic
segment for efficient management of links among semantic segments:
\begin{itemize}
	\item \emph{Child pointers}
	\item \emph{Bit vectors}
\end{itemize}

The child pointers link a semantic segment to its children.  For each attribute
of the query, a bit vector is maintained.  The size of the bit vector is equal
to the number of children.  The children of a node are ordered according to
their arrival.  The $i^\text{th}$ bit in the $j^\text{th}$ bit vector is set to
$1$ if and only if the $i^\text{th}$ child contains the $j^\text{th}$ attribute.
The bit vectors help to retrieve the required children for an attribute quickly.

The size of the bit vector is not constant; rather, it grows or shrinks with the
number of children.  However, since the order of the children is fixed, there is
no ambiguity about which bit refers to which child.

\subsection{Eliminating redundancy of result sets}
\label{subsec:redundancy}

The query processing algorithms use the index structure to eliminate the redundancy 
of result sets between a cached query and its subsets. If a query has a child (i.e., a
subset), then all the skyline tuples are not stored in the result set; rather,
they are distributed between itself and the child.  For example, suppose query
$S_1$ has a child $S_2$, which is a leaf node.  The skyline tuples for $S_2$ are 
stored in its result set, i.e., $r(S_2) = s(S_2)$.  However, since these records are a 
subset of the skyline tuples for $S_1$, redundancy is removed by not
storing them again in $S_1$.  Instead, only the difference of the skyline set for $S_1$ 
with $S_2$ are stored, i.e., $r(S_1) =
s(S_1) - s(S_2)$.  The complete skyline records for $S_1$ can be retrieved by
combining the result set of $S_1$ with that of $S_2$, i.e., $s(S_1) = r(S_1)
\cup r(S_2)$.  In general, when there are multiple children, the skyline records
of all of them need to be combined to retrieve the result set for the parent.

We next explain how a semantic segment is inserted into the index of the cache.
Note that a semantic segment is inserted only when it is queried. 

\begin{figure*}[t]
\centering
\subfigure[\{1,2\}]
{
\includegraphics[width=0.05\textwidth]{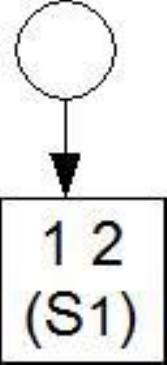}
\label{fig:i1}
}
\subfigure[\{1,2,3\}]
{
\includegraphics[width=0.05\textwidth]{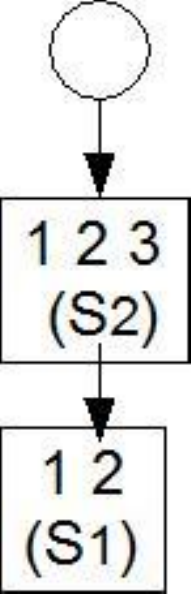}
\label{fig:i2}
}
\subfigure[\{3,4\}]
{
\includegraphics[width=0.15\textwidth]{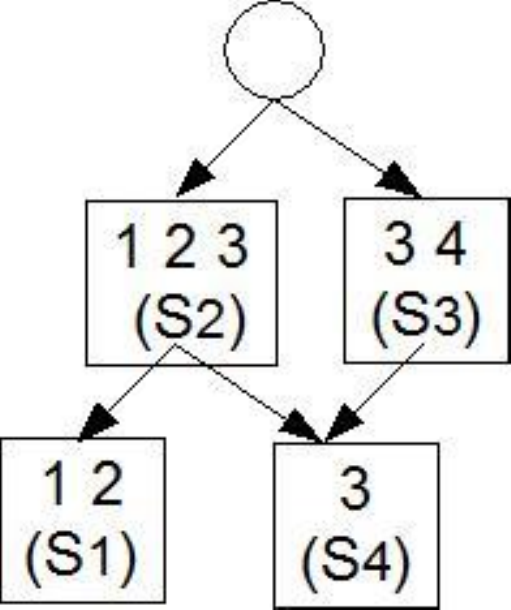}
\label{fig:i3}
}
\subfigure[\{5,6\}]
{
\includegraphics[width=0.20\textwidth]{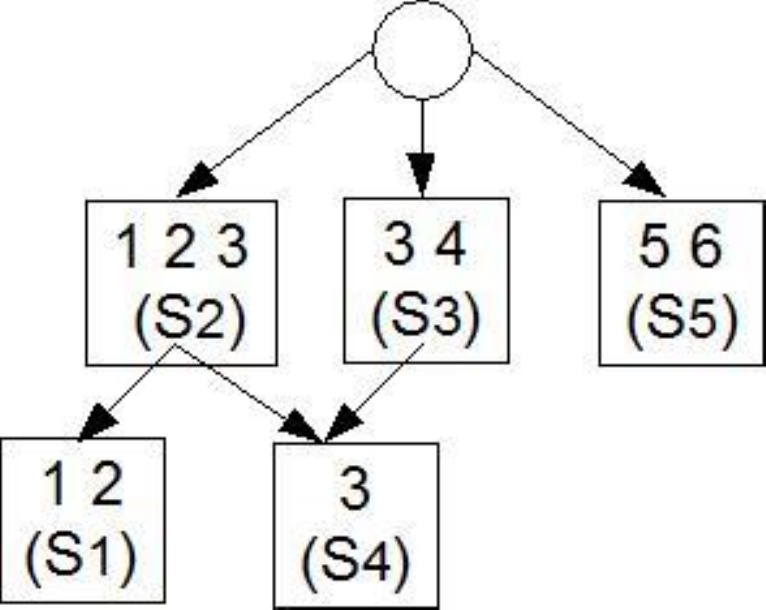}
\label{fig:i4}
}
\subfigure[\{1,2\}]
{
\includegraphics[width=0.20\textwidth]{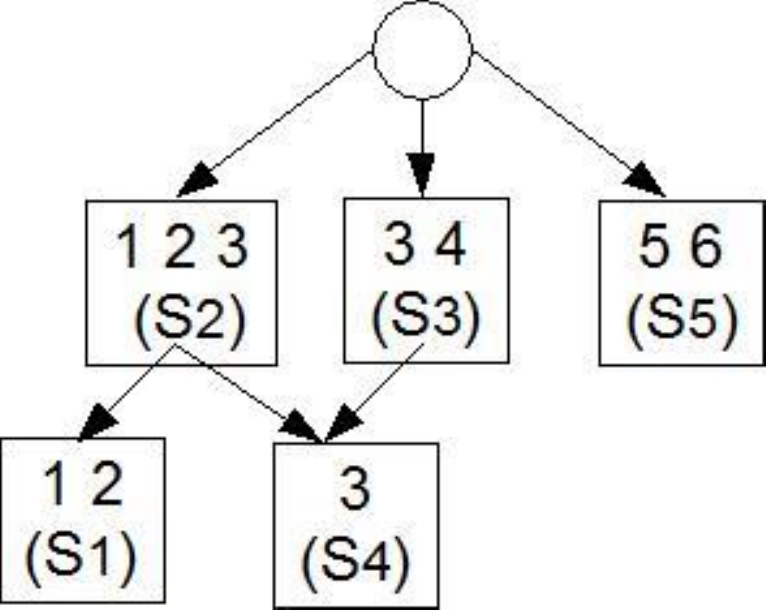}
\label{fig:q1}
}
\subfigure[\{2,3\}]
{
\includegraphics[width=0.20\textwidth]{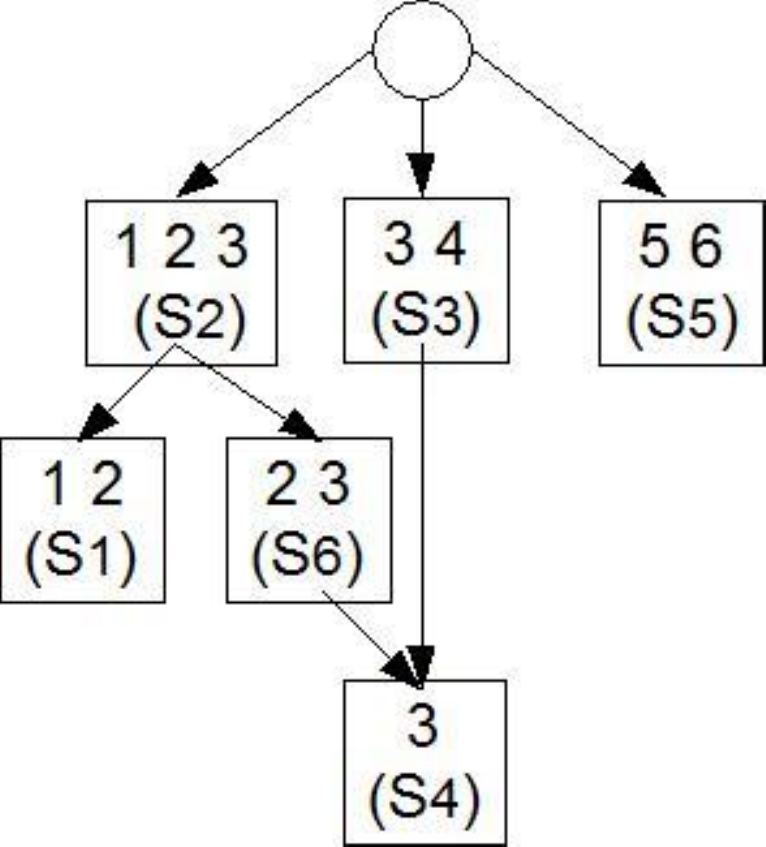}
\label{fig:q2}
}
\subfigure[\{4,5\}]
{
\includegraphics[width=0.25\textwidth]{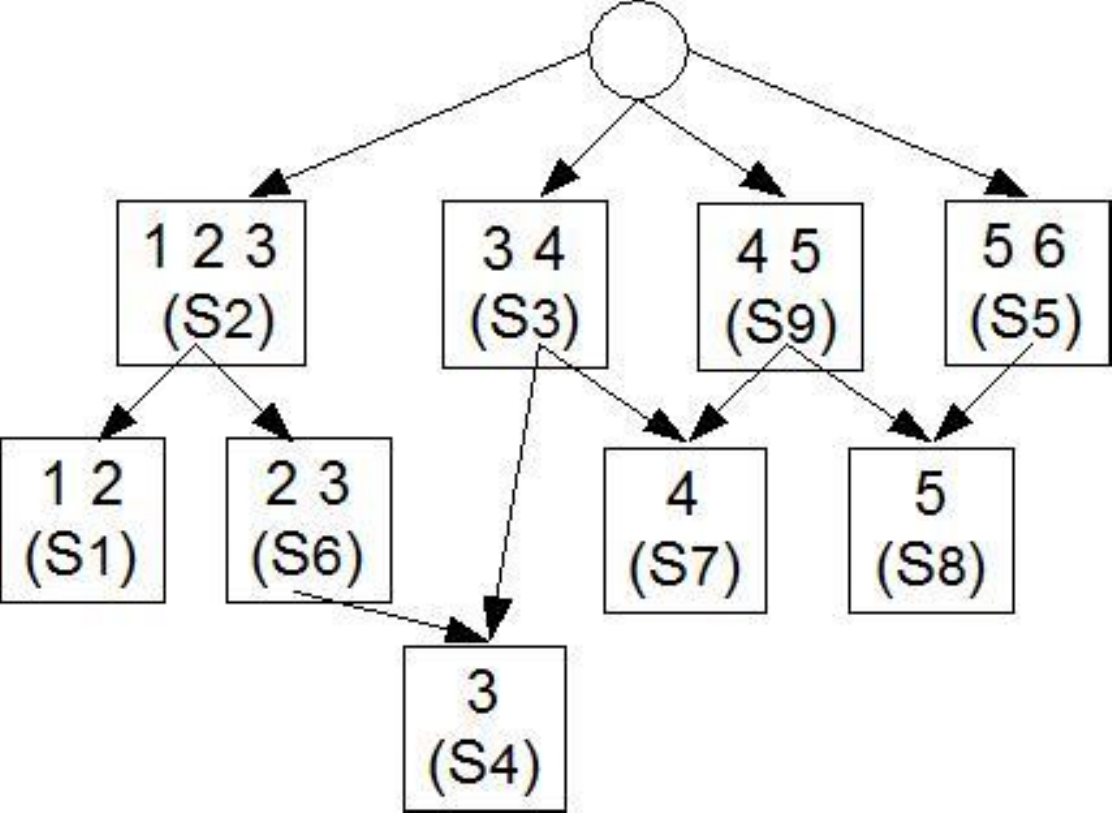}
\label{fig:q3}
}
\subfigure[$\{6,7\}$]
{
\includegraphics[width=0.30\textwidth]{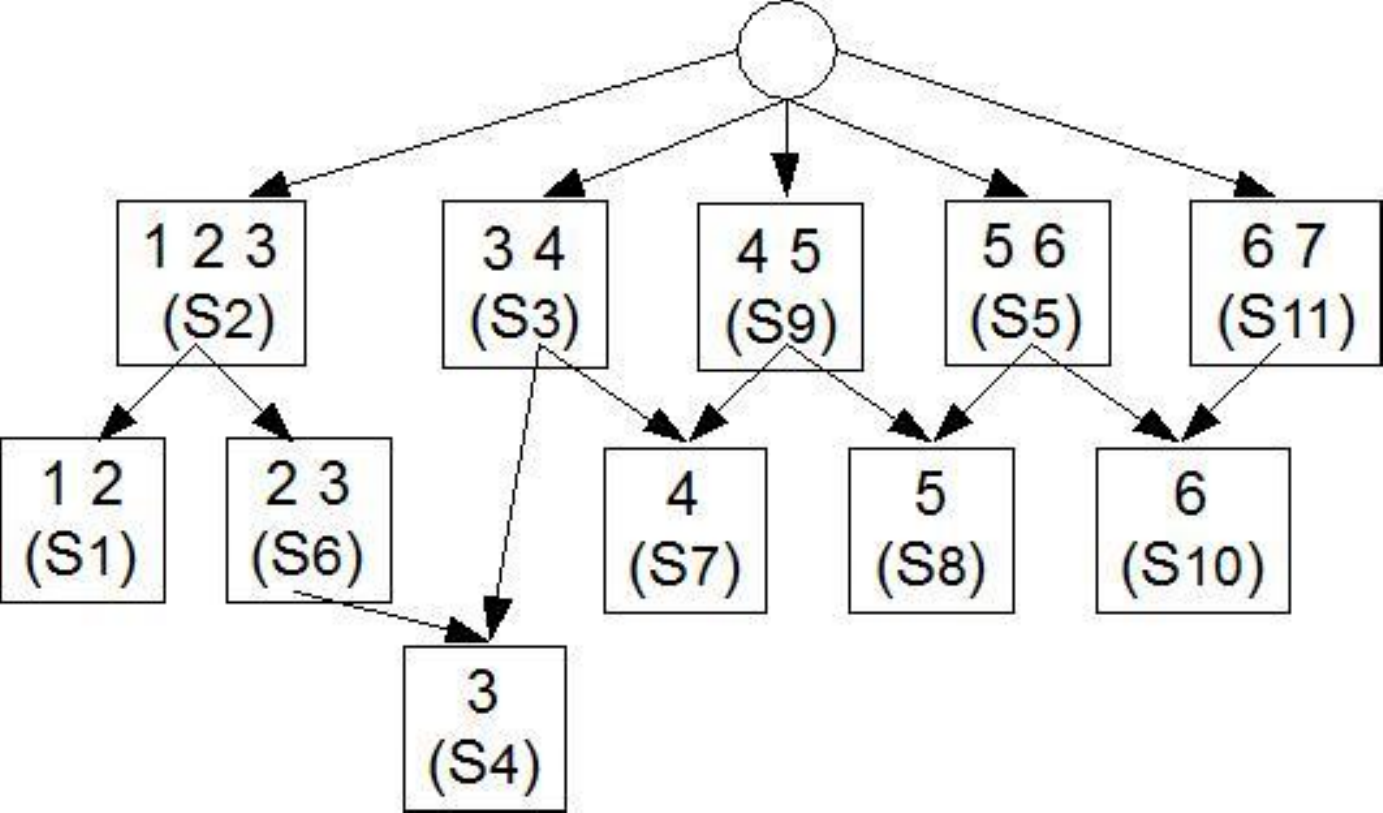}
\label{fig:q4}
}
\subfigure[\{8,9\}]
{
\includegraphics[width=0.35\textwidth]{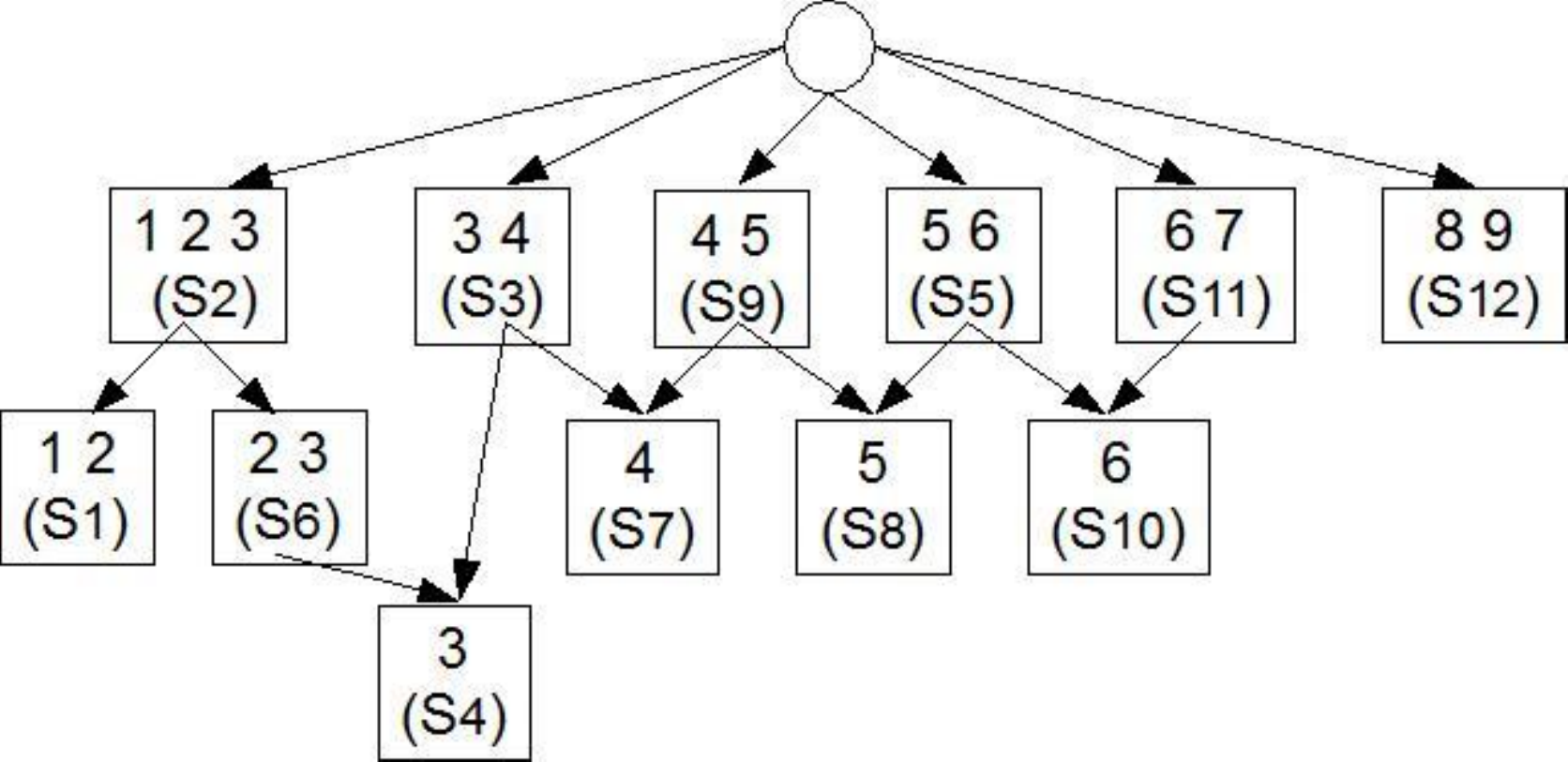}
\label{fig:q5}
}
\caption{Querying and insertion of semantic segments in the index.}
\label{fig:query}
\end{figure*}

\subsection{Query processing and insertion using index}
\label{sec:ins}

We illustrate the index search operation for query processing and subsequent
insertion using the series of query examples as shown in 
Fig.~\ref{fig:query}.  In the figures, only the attributes and the node ids
are shown for simplification.

Initially, the cache is empty and the index simply contains the pseudo root
node.  When the first query $\{1, 2\}$ arrives, it is classified as a novel
query, and is inserted as semantic segment $S_1$ (Fig.~\ref{fig:query}a).

The next query is $\{1, 2, 3\}$.  All the root nodes are searched, and it is
found out that this new query is a superset of a cached query.  Hence, it is
classified as a partial query, and the entire skyline set of $S_1$ is used as
the base set.  The new query now becomes the root and the old root its child
(Fig.~\ref{fig:query}b).

Then, query $\{3, 4\}$ arrives.  Scanning the root nodes, it is found to be
partial to $S_2$.  The base set is computed which consists of the skyline
tuples for the common attributes, i.e., $\{3\}$.  This semantic segment ($S_4$)
is a subset of both $S_2$ and the new query $S_3$ and is, therefore, maintained
as a child of both (Fig.~\ref{fig:query}c).

The next query $\{5, 6\}$ is a novel query as it does not match with
any of the root nodes.  Consequently, it is processed from the database and is
inserted as a new root node in the index (Fig.~\ref{fig:query}d).

Next is an exact query $\{1, 2\}$. The roots are scanned, and is found to be 
a subset query of the first root $S_2$.  The children of this root are then
searched to see if the categorization can be improved (as in this case).  
The skyline set of $S_1$ is returned as the answer and no
change is made to the index (Fig.~\ref{fig:query}e).

Query $\{2, 3\}$ then arrives. Being a subset of $S_2$, only the children of
$S_2$ are searched, but no exact match is found.  The skyline set of $\{2, 3\}$
is computed from that of $\{1, 2, 3\}$ and is inserted as a child of $S_2$.
Since the skyline set of $\{3\}$ is already maintained as a semantic segment
($S_4$), and it is a subset of this new query as well, the child pointers and
bit vectors are appropriately modified in $S_2$ and $S_6$ to reflect the fact
that $S_4$ now is only a descendant of $S_2$ and not a direct child
(Fig.~\ref{fig:query}f).

Queries $\{4, 5\}$, $\{6, 7\}$ and $\{8, 9\}$ are similarly handled
(Figs.~\ref{fig:query}g,~\ref{fig:query}h and~\ref{fig:query}i).

\subsection{Deletion from index}
\label{sec:del}

When the cache is full and a new query arrives, an effective replacement policy
must be chosen to select the replacement candidate.  Further, since the cache is
very dynamic, efficient update operations on the index need to be designed. 
 
The skyline set of a parent in the index is shared among itself and its
children, and the union of these sets are computed for the result.  Therefore,
if a child is deleted from the cache, for correctness, its skyline set has to
merged back with that of its parent.  Since the size of the skyline set is the
largest factor for the size of a semantic segment, deleting a child does not
produce much advantage.  Thus, for our index structure, we only delete the root
nodes and the children become the new roots if they have no parent.

\begin{table*}[t]
\centering
\begin{tabular}{|l|l|}
\hline
\multicolumn{1}{|c|}{\textbf{Parameter}} & \multicolumn{1}{c|}{\textbf{Values}} \\
\hline
\hline
Cardinality ($N$) & 1 $\times$ 10$^4$, 3 $\times$ 10$^4$, \textbf{1} $\times$ \textbf{10}$^5$, 3 $\times$ 10$^5$, 1 $\times$ 10$^6$ \\
Dimensionality ($d$) & 3, 4, 5, \textbf{6}, 7 \\
Cache size ($|C|$) & 0.1\%, 1\%, 3\%, \textbf{5\%}, 7\%, 10\% \\
Number of queries ($|Q|$) & 1, 5, 10, 25, 50, \textbf{100} \\
\hline
\end{tabular}
\caption{Experimental parameters and their default values (in bold).}
\label{tab:expparams}
\end{table*}

\subsection{Cache replacement}
\label{sec:cachereplacement}

Due to limited cache size, not all semantic segments encountered can be stored.
This is the main drawback of SkyCube-based techniques.  For efficient use of
cache, the most useful semantic segments need to be preserved and the rest
should be replaced. 

The first important parameter is the \emph{usage factor} ($\alpha$).  When the
semantic segment is first introduced into the index, its replacement factor is
set to $1$.  Every time its result set is used, the value is incremented.  The
one with a lower replacement factor should be replaced, as it is being less
used.

The second important factor is the \emph{size} ($\beta$) of the skyline set,
i.e., the number of tuples in it.  Since the available memory in the cache is a
premium asset, a semantic segment that stores a large number of tuples as its
skyline set does not allow other semantic segments to be stored.  Hence, it
should be removed. 

The third parameter that determines the usefulness of a semantic segment is
\emph{dimensionality} ($d$).  When the number of dimensions is more, there is
more chance of a new query to become a subset of it or to have more overlap in
case it is a partial query and, therefore, should not be replaced.

A \emph{replacement value} ($\delta$) for each semantic segment is computed by
combining the three, i.e., $\delta = f(\alpha, \beta, d)$.  The semantic segment
with the \emph{lowest} $\delta$ is the \emph{least useful} and should be chosen
for replacement.  The function $f$, therefore, should be monotonic with $\alpha$
and $d$ and anti-monotonic with $\beta$.  While different functions fit the
condition, the following simple function empirically produces good results:
\begin{align}
	\delta = (\alpha \times d) / \beta \nonumber
\end{align}

\section{Experimental results}
\label{sec:exp}

In this section, we evaluate the performance of the caching techniques.  The
techniques were implemented using Java on an Intel Core 2 Duo 2GHz machine with
2GB RAM in Ubuntu Linux environment.  For skyline computation, we used the
non-indexed sort-filter-skyline (SFS)~\cite{chomicki} algorithm.  We analyzed and
compared the execution times of three different skyline processing techniques:
(i)~without using cache (NC), (ii)~using cache without using the index (NI), and
(iii)~using cache with index (Index).  

\begin{figure*}[t]
\centering
\begin{tabular}{cc}
	\includegraphics[width=\figwidth]{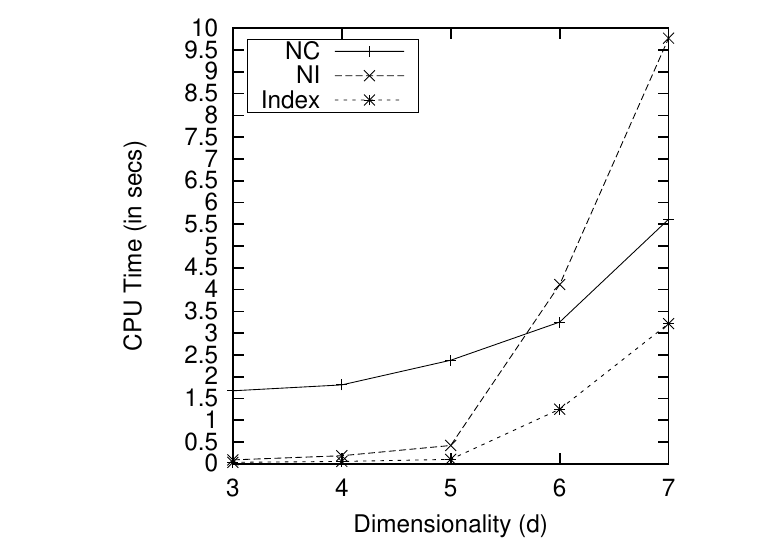}
&
	\includegraphics[width=\figwidth]{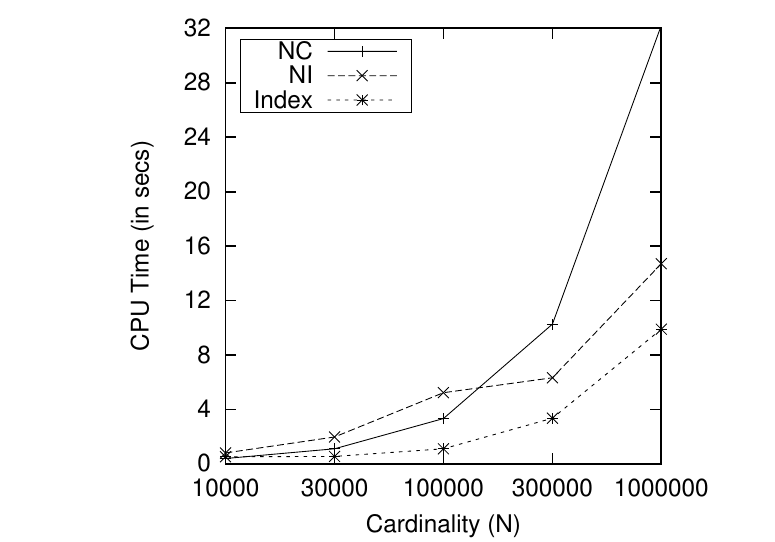}
\\
(a)&(b)
\end{tabular}
\caption{Effect of (a) dimensionality. (b) dataset cardinality.}
\label{fig:dim_cardi} 
\end{figure*}

\subsection{Synthetic datasets}

We used the standard data generator for skyline queries from
\url{http://www.pgfoundry.org/projects/randdataset} to generate synthetic
datasets; the dimensions were chosen to be independent.  
The scalability and performance of the techniques on synthetically generated data were
measured against four different parameters: (i)~cardinality of the dataset,
(ii)~dimensionality of the dataset, (iii)~size of the cache, and (iv)~number of
queries.  The values of these parameters were varied according to
Table~\ref{tab:expparams}.  To study the effect of one parameter, the other
parameters were held constant at the default values shown in bold.

Fig.~\ref{fig:dim_cardi}(a) shows the performance of the different techniques
with varying dimensionality.  As dimensionality increases, the cardinality of
the skyline set increases roughly exponentially for independent
datasets~\cite{1129924,8473992}.  The running time of the non-caching method
more or less shows the same behavior.  The number of semantic segments need to
be maintained increases exponentially as well.  Thus, when no index is used in
the cache, the running time is more than when index is used.  After $d = 5$, the
size of the cache is not enough to hold all the semantic segments, and many new
queries are classified as novel queries or partial queries.  Consequently, the
running time increases.

Fig.~\ref{fig:dim_cardi}(b) shows the effect of the cardinality of the dataset.  
For small datasets, the overhead of searching through all
the semantic segments makes the caching method slower than simply processing the
skylines from the database.  For larger datasets, the overhead becomes
negligible as compared to the gains of using the cache; consequently, the
non-caching technique performs the worst.  The indexing technique reduces this
search overhead and, hence, requires the least amount of time for all datasets.

\begin{figure*}[t]
\centering
\begin{tabular}{cc}
	\includegraphics[width=\figwidth]{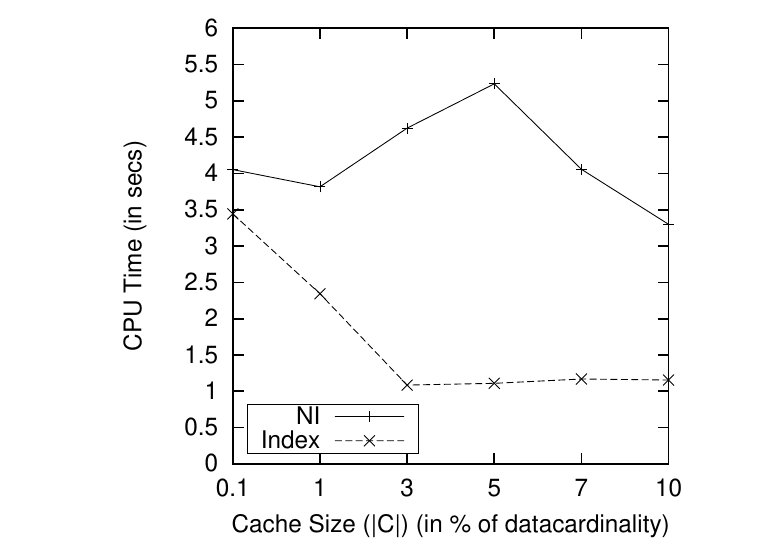}
&
	\includegraphics[width=\figwidth]{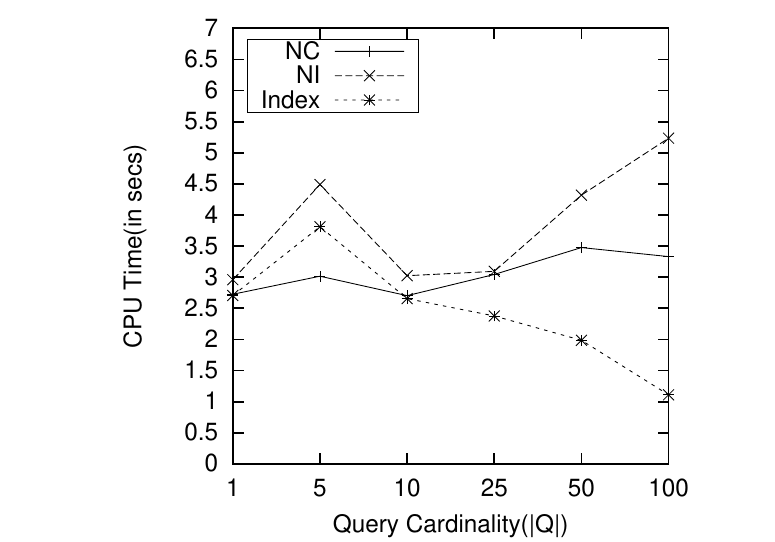}
\\
(a) & (b)
\end{tabular}
\caption{Effect of (a) cache size. (b) query cardinality.}
\label{fig:query_cache}
\end{figure*}

We next investigate the effect of cache size, measured as a percentage of the
size of the dataset.  Since the non-caching method does not depend on the size,
it is omitted from this experiment.  Fig.~\ref{fig:query_cache}(a) shows how the
running time is affected by varying cache size.  When the size is very small,
only a few semantic segments can be stored.  In such situations, indexing helps
only to a small extent.  As the size increases, indexing allows more semantic
segments to be stored because of the way a semantic segment shares its result
set with its subsets.  The non-indexing method, on the other hand, suffers from
processing too many semantic segments without much gain.  When the cache becomes
quite large, it allows most of the semantic segments to be stored along with
their result sets.  More queries can now be classified as exact or subset
queries and the performance of the non-indexing method improves.  The
performance of the indexing technique saturates and does not improve after a
point.

Ideally, when there is enough space in the cache, and the system has ``seen''
all possible skyline queries, any new query should be answered very fast.  The
final set of experiments tries to understand this phenomenon in more detail.  

Figure~\ref{fig:query_cache}(b) shows the average running time of a query as
more and more queries arrive.  When no caching is used, the number of queries do
not have any effect, and as expected, the average running time of a query varies
randomly.  For the first few number of queries, the cache is virtually empty,
and processing the cache yields no hits and no benefit at all.  In fact, the
overhead of maintaining the cache worsens the performance in comparison to the
no-caching technique.  Subsequently, as more queries arrive, the performance
improves for the indexing method.  However, when indexing is not used
and the semantic segments are left unorganized in the cache, lesser number of
semantic segments are stored due to redundancy of the result sets.  This leads
to less number of cache hits, and the performance suffers.

\begin{figure}[t]
\centering
\includegraphics[width=\figwidth]{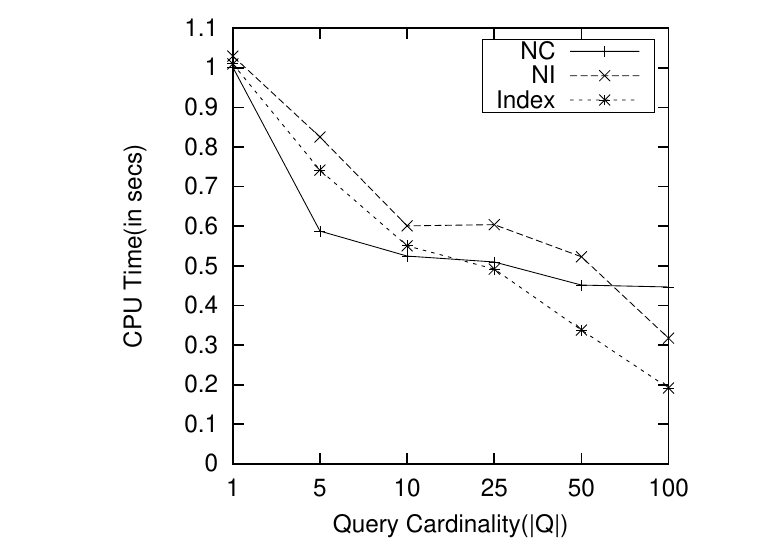}
\caption{Progressive performance of different techniques as more queries arrive.}
\label{fig:varq}
\end{figure}

\subsection{Real datasets}

We also tested the performance of the techniques on a real dataset from
\url{http://www.databasebasketball.com/}. The database provides the statistics
of NBA players with different attributes such as total points, assists, field
goals made, free throws made, etc.  Among these, six different dimensions were
chosen where the data is not missing for most of the players.  The cardinality
of the relation was 19,980.  The cache size was set to 5\% of that of the
relation.

The average running time of a query for the different techniques is plotted in
Fig.~\ref{fig:varq} against the number of queries.  While the time for the
non-caching technique stabilizes after a few queries, that for the caching
methods decreases.  Due to the superior organization of the semantic segments by
the indexing technique, the improvement is more pronounced as compared to the
non-indexing technique.

\section{Conclusions}
\label{sec:conclusion}

In this paper, we have introduced the concept of semantic caching to accelerate
a skyline query by classifying it as one of the four types---exact, subset,
partial and novel.  While the exact and subset queries are processed directly
from the cache, partial results for partial queries can be output from the cache
before resorting to the database for the full skyline set.  We also proposed an
index structure to effectively organize the past queries in the cache and
improve the efficiency of the methods.  Experimental results on synthetic and
real datasets showed the effectiveness and scalability of the methods.  In
future, we plan to handle update-intensive databases.

\pagebreak

{
\bibliographystyle{abbrv}
\balance
\bibliography{refer}

\begin{thebibliography}{10}

\bibitem{kossmann}
S.~B\"{o}rzs\"{o}nyi, D.~Kossmann, and K.~Stocker.
\newblock The skyline operator.
\newblock In {\em ICDE}, pages 421--430, 2001.

\bibitem{1129924}
S.~Chaudhuri, N.~Dalvi, and R.~Kaushik.
\newblock Robust cardinality and cost estimation for skyline operator.
\newblock In {\em ICDE}, pages 64--74, 2006.

\bibitem{chomicki}
J.~Chomicki, P.~Godfrey, J.~Gryz, and D.~Liang.
\newblock Skyline with presorting.
\newblock In {\em ICDE}, pages 717--719, 2003.

\bibitem{dar}
S.~Dar, M.~J. Franklin, B.~T. J\'{o}nsson, D.~Srivastava, and M.~Tan.
\newblock Semantic data caching and replacement.
\newblock In {\em VLDB}, pages 330--341, 1996.

\bibitem{finkelstein}
S.~Finkelstein.
\newblock Common expression analysis in database applications.
\newblock In {\em SIGMOD}, pages 235--245, 1982.

\bibitem{8473992}
P.~Godfrey.
\newblock Skyline cardinality for relational processing.
\newblock In {\em FoIKS}, pages 78--97, 2004.

\bibitem{1083622}
P.~Godfrey, R.~Shipley, and J.~Gryz.
\newblock Maximal vector computation in large data sets.
\newblock In {\em VLDB}, pages 229--240, 2005.

\bibitem{jonsson}
B.~Jonsson, M.~Arinbjarnar, B.~Horsson, M.~J. Franklin, and D.~Srivastava.
\newblock Performance and overhead of semantic cache management.
\newblock {\em ACM Trans. on Internet Technology}, 6(3):302--331, 2006.

\bibitem{nn}
D.~Kossmann, F.~Ramsak, and S.~Rost.
\newblock Shooting stars in the sky: an online algorithm for skyline queries.
\newblock In {\em VLDB}, pages 275--286, 2002.

\bibitem{kung}
H.~T. Kung, F.~Luccio, and F.~P. Preparata.
\newblock On finding the maxima of a set of vectors.
\newblock {\em J. ACM}, 22(4):469--476, 1975.

\bibitem{larson}
P.~A. Larson and H.~Z. Yang.
\newblock Computing queries from derived relations: Theoretical foundation.
\newblock Technical report, University of Waterloo CS-87-35, 1987.

\bibitem{papadias}
D.~Papadias, Y.~Tao, G.~Fu, and B.~Seeger.
\newblock An optimal and progressive algorithm for skyline queries.
\newblock In {\em SIGMOD}, pages 467--478, 2003.

\bibitem{qren}
Q.~Ren and V.~Kumar.
\newblock Semantic caching and query processing.
\newblock {\em IEEE Trans. on Knowledge and Data Engineering}, 15:192--210,
  2003.

\bibitem{ssdbm}
D.~Sacharidis, P.~Bouros, and T.~Sellis.
\newblock Caching dynamic skyline queries.
\newblock In {\em SSDBM}, pages 455--472, 2008.

\bibitem{4354442}
D.~Sun, S.~Wu, J.~Li, and A.~K.~H. Tung.
\newblock Skyline-join in distributed databases.
\newblock In {\em ICDE Workshops}, pages 176--181, 2008.

\bibitem{p491}
T.~Xia and D.~Zhang.
\newblock Refreshing the sky: The compressed skycube with efficient support for
  frequent updates.
\newblock In {\em SIGMOD}, pages 491--502, 2006.

\bibitem{p241}
Y.~Yuan, X.~Lin, Q.~Liu, W.~Wang, J.~Xu~Yu, and Q.~Zhang.
\newblock Efficient computation of the skyline cube.
\newblock In {\em VLDB}, pages 241--252, 2005.

\end{thebibliography}
}

\end{document}